\long\global\def\C#1\F{{}}
\newtheorem{theorem}{Theorem}
\newtheorem{lemma}[theorem]{Lemma}
\newtheorem{definition}[theorem]{Definition}
\newtheorem{fourexamples}[theorem]{Four Examples and One Counterexample}
\newtheorem{gauss graph}[theorem]{The Graph of Gauss's Puzzle}
\newtheorem{arrang hyper}[theorem]{An Arrangement of Hyperplanes}
\let\slantedreturnpytha\returnpytha
\def\returnpytha{\slantedreturnpytha\rm}
\let\slantedorderinvariance\orderinvariance
\def\orderinvariance{\slantedorderinvariance\rm}
\newtheorem{openproblem}[theorem]{Open Problem}
\let\slantedopenproblem\openproblem
\def\openproblem{\slantedopenproblem\rm}
\renewenvironment{proof}{{\sc Proof.}}{\EOP\wl}
\def\tl{\vskip 2mm}
\def\wl{\vskip 4mm}
\def\][{\hspace{.03cm}]\hspace{-.1cm}[}%%%interval, (half) open or closed
\def\SB{\subseteq}
\def\hh{\theta}
\def\Ree{\mathbb R}
\def\Re+{\mathbb R_+}
\def\Re++{\mathbb R_{++}}
\def\Na{\mathbb N}
\def\EQ{\Longleftrightarrow}
\def\EQQ{\quad\Longleftrightarrow\quad }
\def\EOP{\phantom{a}\hfill $\square$}
\def\too{\longrightarrow}
\def\XXX{{\cal X}}
\def\begeq{\begin{equation}}
\def\edeq{\end{equation}}
\def\st{{\;\vrule height8pt width0.7pt depth2.5pt\;}}
\def\sqr#1#2{{\vcenter{\vbox{\hrule height.#2pt
          \hbox{\vrule width.#2pt height#1pt \kern#1pt
              \vrule width.#2pt}
           \hrule height.#2pt}}}}
\def\square{\mathchoice\sqr56\sqr56\sqr{2.1}3\sqr{1.5}3}
\def\roster{\begin{enumerate}}
\def\endroster{\end{enumerate}}
\def\fp{\noindent}
\let\slantedexample\example
\def\example{\slantedexample\rm}
\let\slantedexamples\examples
\def\examples{\slantedexamples\rm}
\let\slantedremark\remark
\def\remark{\slantedremark\rm}
\let\slantedremarks\remarks
\def\remarks{\slantedremarks\rm}
\let\slanteddefinition\definition
\def\definition{\slanteddefinition\rm}
\let\slantednote\note
\def\note{\slantednote\rm}
\let\slantednotes\notes
\def\notes{\slantednotes\rm}
\long\def\@makecaption#1#2{%
  \vskip\abovecaptionskip
  \sbox\@tempboxa{\small #1: \sc #2}%
  \ifdim \wd\@tempboxa >\hsize
    \small #1: \sc #2\par
  \else
    \global \@minipagefalse
    \hb@xt@\hsize{\hfil\box\@tempboxa\hfil}%
  \fi
  \vskip\belowcaptionskip}
\begin{document}

\newcommand{\fl}{\marginpar{\hfill$\rightarrow$}}
\normalsize
\renewcommand{\baselinestretch}{1}

\pagestyle{plain}
\pagestyle{plain}
\def\qqquad{\qquad\quad}

\title{
\vspace*{-1.8cm}
\large On a bounded version of H\"older's Theorem\\ 
and its application to  the permutability equation}
\author{\normalsize Jean-Claude Falmagne\\
{\normalsize University of California, Irvine}
}
\date{\small \today}
\maketitle
\def\too{\longrightarrow}
\thispagestyle{empty}
\vspace{-1cm}

\begin{quote}
{\sl This chapter is dedicated to Patrick Suppes, whose works and counsel have shaped much of my scientific life.} 
\end{quote}

\begin{abstract}
\tl
\fp
{\small The permutability equation $G(G(x,y),z) = G(G(x,z),y)$ is satisfied by many scientific and geometric laws. A few examples among many are: The Lorentz-FitzGerald Contraction, Beer's Law, the Pythagorean Theorem, and the formula for computing the volume of a cylinder. We prove here a representation theorem for the permutability equation, which generalizes a well-known result.  
The proof is based on a bounded version of H\"older's Theorem.}
\end{abstract}
\tl
\fp
Holder's Theorem on ordered groups  is a foundation stone of measurement theory \citep[c.f.][]{krantzetal71, suppesetal89, luceetal90}, and so, of much of quantitative science. There are several renditions of it. Whatever the version, the theorem concerns an algebraic structure $(\XXX, \circ, \precsim)$, in which $\XXX$ is a set, $\circ$ is an operation on $\XXX$, and\, $\precsim$\, is a weak order on $\XXX$ (transitive, connected), which may be a simple order (antisymmetric). The axioms imply the existence of a function $f: \XXX\to\Ree$ such~that 
\begin{align*}
&x\precsim y \quad\EQ\quad f(x)\leq f(y)&& \\[2mm]
&\,\,\,f(x\circ y) = f(x) + f(y)&&(\text{whenever $x\circ y$ is defined}).
\end{align*}
Most formulations of this theorem have one or both of two drawbacks. 
\begin{roster}
\item[] {\sc Hypothesis 1.} The elements of $\XXX$ can be arbitrarily large. 
\item[] {\sc Hypothesis 2.} The elements of $\XXX$ can be arbitrarily small.
\end{roster}
From the standpoint of social sciences applications, both of these hypotheses are unwarranted because the sensory mechanisms of humans and animals restrict the range of usable stimuli. In psychophysics, for example, small stimuli are undetectable by the sensory mechanisms, and large ones would damage them. Even in physics (relativity) the hypothesis that infinitely large quantities exist is inconsistent with current theories. In the axiomatization of \citet{lucemarley69} \citep[see also][page 84]{krantzetal71}, arbitrarily large elements need not exist. However, they use the following solvability axiom, which essentially asserts the existence of arbitrarily small elements. 
\begin{roster}
\item[]{\sl If $x\prec y$, then there is some $z$ such that $x\circ z\precsim y$.}
\end{roster} 

It might be argued that these two hypotheses are idealizations, and that using them simplifies the derivations.
The trouble is that, in the framework of the other axioms,  these two hypotheses imply that the operation $\circ$  is commutative. But commutativity is an essential property, which is testable empirically. To derive such a property from questionable axioms is not ideal.  Our Lemma \ref{Holder} is a version of H\"older's Theorem, due to \citet{falma75a},  in which neither arbitrarily small, nor arbitrarily large elements are assumed to exist, and in which commutativity is an independent axiom. 
\tl  

We use this lemma prove a representation theorem for the `permutability' property, which is 
an abstract constraint on a real, positive valued function $G$ of two real positive variables.  This property is formalized by the equation
\begin{equation}\label{permut_basic}
G(G(y,r),t) = G(G(y,t),r),
\end{equation}
where $G$ is strictly monotonic and continuous in both  variables. An  interpretation of $G(y,r)$ in Equation (\ref{permut_basic}) is that the second variable $r$ in modifies the state of the first variable $y$, creating an effect evaluated by $G(y,r)$ in the same measurement variable as~$y$. The left hand side of (\ref{permut_basic}) represents a one-step iteration of this phenomenon, in that $G(y,r)$ is then modified by $t$, resulting in the effect $G(G(y,r),t) $. Equation (\ref{permut_basic}), which is referred to as the `permutability' condition in the functional equations literature  \citep[c.f.][]{aczel:66}, formalizes the concept that the order of the two modifiers $r$ and $t$ is irrelevant. The importance of that property for scientific applications is that it can sometimes be inferred from a {\sl gedanken} experiment, before any experimentation, thereby substantially  constraining the possible models for a situation. 

 Indeed, under fairly general conditions of continuity and solvability making empirical sense, the permutability condition (\ref{permut_basic}) implies the existence of a general representation
\begin{equation}\label{basic_representation}
G(y,r)=f^{-1}(f(y) + g(r)),
\end{equation} 
where $f$ and $g$ are real valued, strictly monotonic continuous functions. We prove this fact here in the form of our Theorem \ref{permut => Holder},  generalizing results of  \citet{hossz62a, hossz62b, hossz62c} \citep[cf.~also][]{aczel:66}. It is easily shown that the representation (\ref{basic_representation}) implies the permutability condition 
(\ref{permut_basic}): we have
\begin{align*}
G(G(y,r),t)&=f^{-1}(f^{-1}(f(G(y,r))+g(t)))&&(\text{by (\ref{basic_representation})})\\
&=f^{-1}(f^{-1}(f(f^{-1}(f(y) + g(r)))+g(t)))&&(\text{by (\ref{basic_representation}) again})\\
&=f^{-1}(f(y) + g(r) + g(t))&&(\text{simplifying})\\
&=f^{-1}(f(y) + g(t) + g(r))&&(\text{by commutativity})\\
&=G(G(y,t),r)&&(\text{by symmetry})\,.
\end{align*}
We will also use a more general condition, called `quasi permutability', which is defined by the equation
\begin{align}\label{quasi_perm_basic}
M(G(y,r),t) &= M(G(y,t),r)\\
\noalign{\hspace{-.55cm}and lead to the representation}
\label{quasi_perm_repres}
M(y,r) &=m((f(y) + g(r))\,.
\end{align}

In our first section, we state some basic definitions and we 
describe a few examples of laws, taken from physics and geometry, in which the permutability condition  applies. We also give one example, van der Waals Equation, which is not permutable. The second section is devoted to  some preparatory lemmas.  The last  section contain the main results of the paper.
%%%
\section*{Basic Concepts and Examples}

\begin{definition}\label{basic defs} We write $\Ree_+$ and $\Ree_{++}$ for the nonnegative and the positive reals, respectively.  Let $J$, $J' $, and $H$ be  real nonempty and nonnegative intervals. A \emph{(numerical) code} is a function $M: J\times J'  \,\, \overset{\text{\tiny onto}}{\too}\,\, H$ which is strictly increasing in the first variable, strictly monotonic in the second one, and continuous in both. A code $M$ is \emph{solvable} if it satisfies the following two conditions.
\begin{roster}
\item[{[S1]}] If $M(x,t) < p\in H$, there exists $w\in J$ such that $M(w,t) = p$. 
\item[{[S2]}]  The function $M$ is \emph{1-point right solvable}, that is, there exists a point $x_0\in J$ such that for every $p\in H$, there is $v\in J'$ satisfying $M(x_0,v)= p$. In such a case, we may say that  $M$ is \emph{$x_0$-solvable}.
\end{roster}
By the strict monotonicity of $M$, the points $w$ and $v$ of [S1] and [S2] are unique. 
\tl

 Two functions $M: J\times J'  \to H$ and $G :  J\times J'  \to H'$ are \emph{comonotonic} if 
\begin{gather}\label{comonotonic eq 1}
M(x,s)\leq M(y,t)\EQQ  G(x,s)\leq G(y,t),\qquad\quad(x,y\in J; s,t\in J' ) .\\[2mm]
\noalign{In such a case, the equation\vspace{.4mm} }
\label{comonotonic eq 2}
F(M(x,s)) = G(x,s)\quad\qquad\quad\qquad(x\in J; s\in J' )
\end{gather}
defines a strictly increasing continuous function 
$F:H \,\, \overset{\text{\tiny onto}}{\too}\,\,H'$.  We may say then that $G$ is \emph{$F$-comonotonic} with $M$.
\end{definition}

We turn to the key condition of this paper. 
\begin{definition}
A function $M: J\times J'  \too H$ is \emph{quasi permutable} if there exists a function $G: J\times J'  \to J$ co-monotonic with $M$  such that 
\begin{equation}\label{quasi permut}
M(G(x,s),t) = M(G(x,t),s)\qquad\quad(x,y\in J; s,t\in J' ). 
\end{equation}

We say in such a case that $M$ is \emph{permutable with respect to} $G$, or  \emph{$G$-permutable} for short.  When $M$ is permutable with respect to itself, we simply say that $M$ is \emph{permutable}, a terminology consistent with \citet[][Chapter 6, p.~270]{aczel:66}.
\end{definition}

We mention the straightforward consequence:

\begin{lemma}\label{M permut => G permut}
A function  $M: J\times J'  \to H$ is $G$-permutable only if $G$ is permutable.
\end{lemma}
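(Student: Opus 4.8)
The plan is to exploit the fact that comonotonicity transfers identities between $M$ and $G$. Assume $M$ is $G$-permutable, and apply the last part of Definition \ref{basic defs} with $H'=J$: since $G$ is comonotonic with $M$, there is a strictly increasing continuous onto function $F:H\to J$ such that $F(M(x,s))=G(x,s)$ for all $x\in J$ and $s\in J'$. The goal $G(G(x,s),t)=G(G(x,t),s)$ will fall out once both sides are rewritten through $F$.

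First I would note that, because $G$ maps $J\times J'$ into $J$, the element $G(x,s)$ lies in $J$, so the expressions $G(G(x,s),t)$ and $M(G(x,s),t)$ are well defined for all $x\in J$ and $s,t\in J'$, and the defining equation of $F$-comonotonicity may be invoked with first argument $G(x,s)$. This yields
\[
G(G(x,s),t)=F\bigl(M(G(x,s),t)\bigr),\qquad G(G(x,t),s)=F\bigl(M(G(x,t),s)\bigr).
\]
Next I would feed in the hypothesis that $M$ is $G$-permutable, namely $M(G(x,s),t)=M(G(x,t),s)$ for all admissible $x,s,t$. Since $F$ is a (single-valued) function, applying $F$ to both sides of this equality and comparing with the two displayed identities gives $G(G(x,s),t)=G(G(x,t),s)$, which is precisely the permutability of $G$.

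There is essentially no obstacle: the only point deserving a moment's care is the bookkeeping of domains — one must check that $G(x,s)\in J$ so that both $M(G(x,s),t)$ and the comonotonicity relation are legitimate — and this is guaranteed by the stipulation $G:J\times J'\to J$ in the definition of quasi permutability. In fact strict monotonicity and continuity of $F$ play no role in this direction; only its existence as a function is used.
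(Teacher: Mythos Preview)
Your proof is correct and follows essentially the same route as the paper's own argument: both use $F$-comonotonicity to write $G(G(x,s),t)=F(M(G(x,s),t))$, invoke the $G$-permutability of $M$ to swap $s$ and $t$ inside $M$, and conclude. Your added remarks on domain bookkeeping and on the fact that only the existence of $F$ as a function is needed are accurate but not essential.
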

\begin{proof} Suppose that $G$ is $F$-comonotonic with $M$. For any $x\in J$ and  $s,t\in J' $, we get
$G(G(x,s),t) = F(M(G(x,s),t))= F(M(G(x,t),s))=G(G(x,t),s).$
\end{proof}
Many scientific laws embody  permutable or quasi permutable numerical codes, and hence can be written in the form of Equation (\ref{basic_representation}). We give four quite different examples below.  In each case, we derive the forms of the functions $f$ and $g$ in the representation equation (\ref{basic_representation}). 
\begin{fourexamples}\label{permut examples}{\rm \hfill\\[2mm]
(a) {\sc The Lorentz-FitzGerald Contraction.} This term denotes a phenomenon in special relativity, according to which the apparent length of a rod measured by an observer moving  at the speed $v$ with respect to that rod is a decreasing function of~$v$, vanishing as $v$ approaches the speed of light. This function is specified by the formula
\begin{equation}\label{standard LF}
L (\ell,v) =\ell \sqrt{
1-\left (\frac
v{c}\right )^2},
\end{equation}
in which $c>0$ denotes the speed of light,  $\ell$ is the actual length of the rod (for an observer at rest with respect to the rod), and 
$L :\Ree_+\times [0,c[\,\, \overset{\text{\tiny onto}}{\too}\,\,\Ree_+$ is the length of the rod measured by the moving observer.

The function  $L$ is a permutable code.  Indeed, $L$ satisfies the strict monotonicity and continuity requirements, and we have
\begin{align}\label{standard LF permut}
L(L (p,v),w) ={p} \left(1-\left (\frac v{c}\right )^2\right)^{-\frac 12} 
\left(1-\left (\frac w{c}\right )^2\right)^{-\frac 12}
= L(L(p,w),v).
\end{align} 
Solving the functional equation
\begin{equation}\label{fe_LF}
\ell \sqrt{1-\left(\frac vc\right)^2}= f^{-1}(f(\ell) +g(v))
\end{equation}
leads to the Pexider equation \citep[c.f.][pages 141-165]{aczel:66} 
\begin{align}\label{pexi_a}
f(\ell y) &= f(\ell ) + k(y)\\
\nonumber
\text{with}\qquad\quad  k(y) &=  g\left(c \sqrt{1-y^2}\,\right).
\end{align}
As the background conditions (monotonicity and domains of the functions\footnote{Note that the standard solutions for Pexider equations are valid when the domain of the equation is an open connected subset of $\Ree^2$ rather than $\Ree^2$ itself. Indeed,  \citet[][see also  Acz\'el, 2005, Chudziak and Tabor, 2008, and Rad\'o and Baker, 1987]{aczel:87}\nocite{aczel:05}\nocite{rado87}\nocite{chudziak08}
has shown that, in such cases,  this equation can be extended to the real plane.}) are satisfied,
the unique forms of $f$ and $g$ in (\ref{pexi_a}) are determined. They are: with $\xi > 0$,
\begin{align}\label{sol_f_LF}
f(\ell ) &= \xi\ln \ell  + \hh\\
\label{sol_g_LF}
g(v) &= \xi\ln \left (\sqrt{1-\left(\frac vc\right)^2}\right)\,.
\end{align}
\tl

(b) {\sc Beer's Law.} This law applies in a class of empirical situations where an incident radiation
traverses some absorbing medium, so that only a fraction of the 
radiation goes through. In our notation, the expression of the law is
\begin{equation}\label{standard BL}
I (x,y) =x \,e^{-\frac y{c}}, \qquad\qquad(x,y\in \Ree_+,\, c\in\Ree_{++}\text{ constant})
\end{equation}
in which $x$ denotes the intensity of the incident light, $y$ is the
concentration  of the absorbing medium, $c$ is a reference level, and
$I(x,y)$ is the intensity of the transmitted radiation. The form of this law is similar to that of the Lorentz-FitzGerald Contraction and the same arguments apply. Thus,  the function $I: \Ree_+\times\Ree_+\overset{\text{\tiny onto}} {\too}\Ree_+$ is also a permutable code. 
The solution of the functional equation 
\begin{align*}
x \,e^{-\frac y{c}}&= f^{-1}(f(x) + g(y))
\end{align*}
follows a pattern identical to that of Equation (\ref{fe_LF}) for the Lorentz-FitzGerald Contraction. The only difference lies in the definition of the function $g$, which is here
$$
g(y) = -\xi \frac yc\,.
$$
The definition of $f$ is the same, namely (\ref{sol_f_LF}).
So, we get
\begin{equation*}
I(x,y)= f^{-1}(f(x) + g(y))
= \exp\left( \frac 1\xi( \xi\ln x + \hh - \xi \frac yc- \hh\right)
= x \,e^{-\frac yc}\,.
\end{equation*}

(c) {\sc The volume of a cylinder.}  The permutability equation applies not only to many physical laws, but also to some fundamental formulas of geometry, such as the volume $C(\ell, r)$ of a cylinder of radius $r$ and height $\ell$, for example. In this case, we have
\begin{equation}\label{cylinder}
C(\ell, r) = \ell \pi r^2,
\end{equation}
which is permutable. We have
$$
C(C(\ell, r), v) = C(\ell \pi r^2, v) = \ell \pi r^2 \pi v^2 = C(C(\ell, v), r).
$$
Solving the functional equation
$$
 \ell \pi r^2 = f^{-1}(f(\ell) + g(r))
$$
yields the solution
\begin{align*}
f(\ell) & = \xi \ln \ell + \hh
\end{align*}
(again, the function $f$ is the same as in the two preceding examples), and
\begin{align*}
g(r)&= \xi \ln \left(\pi r^2\right)\,,
\end{align*}
with
$$
f^{-1}(f(\ell) + g(r)) = \exp\left(\frac 1\xi \left(\xi\ln \ell +\hh + \xi \ln \left(\pi r^2\right)-\hh\right)\right) = \ell \pi r^2\,. 
$$

We give another geometric example below, in which the form of $f$ is different.
\tl

(d) {\sc The Pythagorean Theorem.}  The function
\begin{equation}\label{Pytha}
\hspace{1cm}P(x,y) = \sqrt{x^2 + y^2}\qqquad\qqquad(x,y \in \Ree_{++}),
\end{equation}
representing the length of the hypothenuse of a right triangle in terms of the lengths of its sides, 
is a  permutable code.
We have indeed 
\begin{gather*}
P(P(x,y),z) = \sqrt {P(x,y)^2 + z^2}  
= \sqrt{x^2+y^2 + z^2} 
= P(P(x,z),y).
\end{gather*}
The function $P$ is symmetric. So we must solve the equation
\begin{align}\nonumber
\sqrt{x^2 + y^2} &= f^{-1}\left(f(x) + f(y)\right)\\
\noalign{or, equivalently,}
\label{Cauchy_pytha}
f\left(\sqrt{x^2+y^2}\right)&=f(x) +f(y)\,.
\end{align}
With $z=x^2$, $w=y^2$, and defining the function $h(z) = f\left(z^{\frac 12}\right)$, Equation (\ref{Cauchy_pytha}) becomes
$$
h(z+w)= h(z) + h(w)\,,
$$
a Cauchy equation on the positive reals, with $h$ strictly increasing. It has the  unique solution
$
h(z) = \xi \,z,
$ 
for some positive real number $\xi$ \citep[c.f.][page 31]{aczel:66}. We get 
\begin{align*}
f(x) &= \xi x^2\\
\noalign{ and}
f^{-1}(f(x) + f(y)) &= \left(\frac 1\xi\left(\xi x^2+ \xi y^2\right)\right)^{\frac 12} = \sqrt{x^2+y^2}\,.
\end{align*}

(f) {\sc The Counterexample: van der Waals Equation.} One form of this equation is
\begin{equation}\label{vdw_equation}
 T(p,v)= K\left(p+\frac a{v^2}\right)(v-b),
 \end{equation}
 in which $p$ is the pressure of a fluid, $v$ is the volume of the container, $T$ is the temperature, and $a$, $b$ and $K$ are constants; $K$ is the reciprocal of the Boltzmann constant. It is easily shown that the function $T$ in 
 (\ref{vdw_equation}) is not permutable.
 \begin{comment}
 Van der Waals Law $T(p,v)= K\left(p+\frac a{v^2}\right)(v-b)$ is not permutable:
\begin{align*}
T(T(p,v),r)&= K\left (T(p,v) + \frac a{r^2}\right )(r-b)\\
&= K\left (K\left(p + \frac a{v^2}\right)(v-b) + \frac a{r^2}\right )(r-b)\\
&=K^2\left(p + \frac a{v^2}\right)(v-b)(r-b) + K\frac a{r^2}(r-b)\\
&=K^2(r-b)\left(\left(p + \frac a{v^2}\right)(v-b) + \frac a{Kr^2}\right)\\
&=K^2(r-b)\left(pv-pb+\frac av-\frac{ab}{v^2} + \frac a{Kr^2}\right)
\end{align*}
\end{comment}
}
\end{fourexamples}
\def\F1-1{F^{-1}_1}
\begin{openproblem}
Examining the four examples (a) to (d) above suggests that once the exact form of a permutable law is known, the form of the functions $f$ and $g$ in the representation (\ref{basic_representation}) can easily be guessed. For example, in each of the problems (a), (b), and (c), the permutable law is the product of two functions, with the first one being the identity function. In these problems, the form of $f$ is the same, namely $f(x) = \xi\ln x + \hh$. A more difficult problem is: are there basic structural properties which, in addition to  permutability,  determine the form of a permutable law, possibly up to some parameters? We will consider this problem in a later paper.
\end{openproblem}
\section*{Preparatory Results} 
The main step in our developments is based on the following construction.
\begin{definition}\label{def operation} Suppose that $G:J\times J'\to J$ is a code that is $x_0$-solvable in the sense of Condition [S2].
 Define the operation $\bullet$ on $J$ by the equivalence
\begin{equation}\label{def circ}
x\bullet y = G(x, v)\quad\EQ\quad G(x_0, v) = y\qquad\quad(x,y\in J;v\in J').
\end{equation}
\end{definition}
We show in this section that a solvable code $G$ is permutable if and only if it has an additive representation
\begin{equation}\label{additive G 1}
G(y,v) = f^{-1}(f(y) + g(v))\qqquad(x,y\in J;v\in J')
\end{equation}
where $f: J \to \Ree_+$ and $g: J'\to \Ree_+$ are continuous functions with $f$ strictly increasing and $g$ strictly monotonic.
\tl
Our basic tool lies in the following lemma. 

\begin{lemma}\label{Holder} Let $J$ be a real non degenerate interval. With $R\SB J\times J$, let\, 
$\bullet :R\to J$ be a non necessarily closed operation on $J$. We write $xRy$ to mean that $x\bullet y$ is defined. Suppose that the triple 
$(J, \bullet , \leq)$,  where $\leq$ is the inequality of the reals,  satisfies the following five independent  conditions:
\begin{roster}
\item[{\rm (i)}] $yRx$ if $xRy$, and when $yRx$, then $y\bullet x = x\bullet y$\,;
\item[{\rm (ii)}]  whenever $yR x$, $wR z$, $ wRy'$, $z'Rx $, $yR y'$ and $z'R z$, then
$$
(y\bullet x = w\bullet z)\,\, \text{and}\,\, (w\bullet y'=z'\bullet x) \quad\text{imply}\quad 
y\bullet y' = z'\bullet z\,;
$$
\item[{\rm (iii)}]  there exists $x\in J$ such that $xRx$ and $x\bullet xRx$\,;
\item[{\rm (iv)}]  if $y\bullet x < z$, then $y\bullet w = z$ for some $w$ in $J$\,;
\item[{\rm (v)}]   for every $x$, $y$ and $z$ in $J$, with $x<y$, the set $N(x,z;y) = \{n\in\Na^+\st  x_y^n\leq z\}$ is finite, where the sequence $(x_y^n)$ is defined recursively as follows:
\begin{roster}
\item[{\rm (a)}] $x_y^1 = x$;
\item[{\rm (b)}]  if $x_y^{n-1}$ is defined and $x'$ exists such that $y \bullet x_y^{n-1} = x\bullet x'$ then $x_y^n = x'$. 
\end{roster}
\end{roster}
Then, there exists a strictly increasing function $f:J\to J$ such that
$$
f(x\bullet y) =f(y) + f(y).
$$
\end{lemma}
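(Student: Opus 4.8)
The plan is to run the classical H\"older ``measurement by standard sequences'' argument, but re-engineered so that it never calls for a product the partial operation $\bullet$ cannot form; the bounded interval $J$ and the possibly terminating standard sequences of (v) are exactly what the usual proof has to be rebuilt around.

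First I would extract the elementary order-algebra of $\bullet$ from the axioms: that $\bullet$ is strictly increasing in each argument on its domain, that it obeys a cancellation law ($a\bullet c=b\bullet c\Rightarrow a=b$, where both are defined), and that it is associative wherever both iterated products exist, i.e. $(x\bullet y)\bullet z=x\bullet(y\bullet z)$ when both sides are defined. Each of these is a short manipulation, but a delicate one: whenever an equation must be pushed through the bisymmetry axiom (ii), the auxiliary elements it mentions must first be produced, and the only tool for producing them is the restricted solvability axiom (iv), with commutativity (i) used throughout to shuffle factors. I would also record the one use of (iii): there is an $e\in J$ with $e\bullet e$ and $(e\bullet e)\bullet e$ defined, so the ``difference'' carried by $e$ can be laid down at least twice --- a non-degenerate reference against which everything will be measured.

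The heart of the matter is the relation \emph{equal differences} on $J\times J$, given by $(a,b)\sim(c,d)$ iff $a\bullet d$ and $b\bullet c$ are defined and equal (intended reading: $f(b)-f(a)=f(d)-f(c)$). Axiom (ii) is precisely what makes $\sim$ transitive, and the monotonicity of $\bullet$ makes it compatible with the real order carried by $J$; so $(J,\sim,\le)$ behaves like an Archimedean algebraic difference structure, the Archimedean clause being supplied by (v). Fixing the reference $e$, I observe that for $x<y$ in $J$ the sequence $(x_y^n)$ of axiom (v) is the standard sequence based at $x$ with constant increment ``$y$ over $x$'': the term $x_y^n$ plays the role of the element of value $f(x)+(n-1)\bigl(f(y)-f(x)\bigr)$, it is strictly increasing where defined (monotonicity of $\bullet$), and by (v) it overshoots every $z\in J$ after finitely many terms --- this finite-overshoot statement is what remains of the Archimedean axiom once the scale is bounded and the sequences may stop. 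Counting, for each $n$, how many increments of a given difference fit at or below the $n$-th term of the standard sequence built from $e$, then dividing by $n$ and passing to the limit --- exactly as in the ordinary H\"older proof --- yields a number $f(x)$; cancellation and monotonicity make the counts monotone and subadditive so the limit exists, the consistency of $\sim$ (that is, axiom (ii)) forces $f$ to be additive, and (iii) keeps the reference difference from being null, so $f$ is non-constant and hence strictly increasing. Equivalently, one may simply verify the axioms of a standard representation theorem for Archimedean difference structures and quote it. Finally $f(x\bullet y)=f(x)+f(y)$ is read off from the construction together with commutativity (i), $f$ is unique up to a positive factor, and after rescaling its range is placed as required.

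The main obstacle is exactly the partiality of $\bullet$. In the classical argument one freely forms $nx$ and arbitrary products $x\bullet y$; here every such element must be either manufactured by (iv) or justified by its staying inside the bounded interval $J$, and the standard sequences must be allowed to terminate. Bookkeeping which products are defined while driving equalities through (ii), and checking that the ratio limits still exist and are additive with \emph{no} continuity assumption available --- only the Archimedean axiom (v) and the fact that $J$ already lives in $\Ree$ --- is where essentially all the work sits; the asserted independence of the five conditions is a separate issue, handled by examples and not needed for the representation itself.
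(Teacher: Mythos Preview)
The paper does not prove this lemma at all: immediately after the statement it simply writes ``For a proof, see Falmagne (1975)'' and moves on to use the lemma as a black box. So there is no in-paper argument to compare your proposal against.

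That said, your outline is the right shape for what Falmagne (1975) does: derive monotonicity, cancellation, and (partial) associativity from (i)--(iv), interpret the recursive sequence in (v) as a standard sequence with increment ``$y$ over $x$'', use the finiteness clause of (v) as the Archimedean axiom, and build $f$ by the usual counting-and-limit construction. Your identification of the main difficulty---that every product invoked must be shown to exist via (iv) or to lie in $J$, since $\bullet$ is partial and standard sequences may terminate---is exactly the point of the 1975 paper. Two small remarks: the displayed conclusion $f(x\bullet y)=f(y)+f(y)$ is a typo in the paper for $f(x)+f(y)$, which you have silently corrected; and the codomain $f:J\to J$ in the statement is peculiar (one expects $\Ree$ or $\Ree_+$), so your ``after rescaling its range is placed as required'' is doing real work and would need to be justified---an additive $f$ on a bounded $J$ with a partial operation need not have range contained in $J$ without some normalization.
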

 \cite[For a proof, see][]{falma75a}.
\begin{lemma} \label{G satisfies 1-5} Let $G:J\times J'\to J$ be a solvable, permutable code. Then, the triple $(J,\bullet, \leq )$, with the operation $\bullet$ defined by {\rm (\ref{def circ})}, satisfies  Conditions {\rm (i)-(v)} of Lemma~{\rm \ref{Holder}}. Moreover, the operation $\bullet$ is associative, strictly increasing and continuous in both variables.
\end{lemma}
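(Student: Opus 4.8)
The plan is to replace the defining equivalence (\ref{def circ}) by an explicit formula and then read off each condition from it. By [S2] the map $v\mapsto G(x_0,v)$ is a continuous strictly monotonic bijection from $J'$ onto $J$; write $\phi:J\to J'$ for its (continuous, strictly monotonic) inverse. Then (\ref{def circ}) says precisely that $x\bullet y=G(x,\phi(y))$, so $\bullet$ is a \emph{total} operation on $J$; in particular $xRy$ holds for all $x,y\in J$, and every relational premise occurring in conditions (i)--(v) of Lemma \ref{Holder} is vacuously satisfied. From the formula, $\bullet$ is strictly increasing in its first variable because $G$ is, and it is jointly continuous, being the composite of the continuous maps $(x,y)\mapsto(x,\phi(y))$ and $G$.

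Next I would transport the permutability of $G$ through this formula. Applying the identity $G(G(x,s),t)=G(G(x,t),s)$ with first argument $x_0$ and with $\phi(x),\phi(y)$ for the remaining two arguments gives $G(x,\phi(y))=G(y,\phi(x))$, that is $x\bullet y=y\bullet x$; this is (i). Commutativity together with strict monotonicity of $\bullet$ in the first variable then shows that $\bullet$ is strictly increasing in the second variable as well (since $y\mapsto x\bullet y=y\bullet x$), which completes the monotonicity claim. Applying the same identity with a general first argument and with $\phi(y),\phi(z)$ for the remaining two gives $(x\bullet y)\bullet z=(x\bullet z)\bullet y$, so $\bullet$ is itself permutable; combined with commutativity this yields associativity, since $x\bullet(y\bullet z)=(y\bullet z)\bullet x=(y\bullet x)\bullet z=(x\bullet y)\bullet z$. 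Condition (iii) holds for any $x\in J$ (such an $x$ exists since $J$ is nondegenerate), because $\bullet$ is total. For (iv): if $y\bullet x<z$ then, by (i), $G(x,\phi(y))=x\bullet y=y\bullet x<z\in J$, so [S1] applied with $\phi(y)$ as second argument yields $w\in J$ with $G(w,\phi(y))=z$, that is $w\bullet y=z$, and hence $y\bullet w=z$ by (i). For (ii): from $y\bullet x=w\bullet z$ and $w\bullet y'=z'\bullet x$ I would form $(y\bullet x)\bullet(w\bullet y')=(w\bullet z)\bullet(z'\bullet x)$, regroup both sides by commutativity and associativity into $(y\bullet y')\bullet(x\bullet w)=(z\bullet z')\bullet(x\bullet w)$, and cancel the common factor $x\bullet w$ --- legitimate since $u\mapsto u\bullet(x\bullet w)$ is strictly increasing, hence injective --- to obtain $y\bullet y'=z\bullet z'=z'\bullet z$.

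The one step calling for a genuine, if short, analytic argument is (v). One first checks that $(x_y^n)$ is defined for every $n$: given $x_y^{n-1}$, since $x<y$ and $\bullet$ is strictly increasing in the first variable we have $x\bullet x_y^{n-1}<y\bullet x_y^{n-1}$, so (iv) supplies $x_y^n$ with $x\bullet x_y^n=y\bullet x_y^{n-1}$; the same inequality gives $x\bullet x_y^n>x\bullet x_y^{n-1}$, hence $x_y^n>x_y^{n-1}$, so $(x_y^n)$ is strictly increasing. If $(x_y^n)$ converged to some $L\in J$, then passing to the limit in $x\bullet x_y^n=y\bullet x_y^{n-1}$ and using the joint continuity of $\bullet$ would give $x\bullet L=y\bullet L$, contradicting $x<y$ together with strict monotonicity in the first variable. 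Hence $(x_y^n)$ has no limit lying in $J$; being monotone increasing from the point $x\in J$, and since every real strictly between $x$ and $\sup J$ lies in $J$, it must be cofinal in $J$ (either unbounded above, or converging to $\sup J$, which then cannot belong to $J$). Consequently, for each $z\in J$ one has $x_y^n>z$ for all large $n$, so $N(x,z;y)$ is finite; this is (v). I expect (v) --- precisely, ruling out a limit in the interior of $J$ and dispatching the bounded/unbounded and open/closed endpoint cases of $J$ --- to be the main, if modest, obstacle; the remaining conditions are formal consequences of the identity $x\bullet y=G(x,\phi(y))$ together with the commutativity, associativity, and cancellation property of $\bullet$.
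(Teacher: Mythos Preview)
Your proof is correct, but it is organized rather differently from the paper's. The key device---writing $x\bullet y=G(x,\phi(y))$ with $\phi$ the inverse of $v\mapsto G(x_0,v)$---lets you see at once that $\bullet$ is a total operation on $J$, so (iii) is trivial and the relational premises in (i)--(ii) are vacuous; the paper instead works directly from the defining equivalence~(\ref{def circ}) and checks each clause by tracking auxiliary parameters $r,s,t,v,\ldots$ through $G$. You also prove associativity \emph{before} Conditions (ii)--(v) and then use it, together with commutativity and cancellation, to dispatch (ii) by regrouping a four-fold product and stripping the common factor $x\bullet w$; the paper proves (ii) first, by two bare applications of the permutability of $G$, and only afterwards establishes associativity. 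For (v) your argument is the shorter one: passing to the limit in $x\bullet x_y^n=y\bullet x_y^{n-1}$ (note that only continuity of $\bullet$ in its \emph{second} variable is used here, since the first arguments $x,y$ are fixed---``joint continuity'' is more than you need or have directly asserted) yields $x\bullet L=y\bullet L$ and hence $x=y$, a contradiction. The paper avoids invoking continuity of $\bullet$ at this step: assuming $\lim x_y^n=z\in J$, it uses (iv) to produce $z'<z$ with $y\bullet z'=x\bullet z$, picks $m$ with $z'<x_y^m<z$, and shows $x_y^{m+1}>z$. Your route is cleaner and leans more heavily on the derived algebra of $\bullet$; the paper's stays closer to the raw code $G$ and to the order-theoretic axioms (iv)--(v) themselves.
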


\begin{proof} Take any $x,y\in J$ with 
\begin{align}\label{commut 1}
G(x_0,r) &= x\\
\noalign{and} 
\label{commut 2}
G(x_0, v) &= y.
\end{align}

(i) 
By (\ref{def circ}), (\ref{commut 1}), (\ref{commut 2}) and the permutability of $G$, we get successively, 
$$
y\bullet x = G(y,r)= G(G(x_0,v),r)=G(G(x_0,r),v) = G(x,v) = x\bullet y.
$$

(ii) Suppose that
\begin{equation}\label{simplifyability eq o}
(y\bullet x = w\bullet z)\,\, \text{and}\,\, (w\bullet y'=z'\bullet x) .
\end{equation}
With (\ref{commut 1}),  (\ref{commut 2}) and 
\begin{equation}\label{def the G(x0)'}
G(x_0, s) = z, \,\,
G(x_0, t) = w, \,\,G(x_0, v') = y', \,\,G(x_0, s') = z',
\end{equation}
we get from (\ref{simplifyability eq o})
\begin{align}\label{simplifyability eq G1}
G(y,r)&= G(w,s)\\
\label{simplifyability eq G2}
G(w,v') &= G(z',r)\,.
\end{align}
Equation (\ref{simplifyability eq G1}) gives
$$
G(G(y,r),v') = G(G(w,s), v'),
$$
which yields successively
\begin{align*}
G(G(y,v'),r) &= G(G(w,v'), s)&&(\text{by permutability})\\
&=G(G(z',r), s)&&(\text{by (\ref{simplifyability eq G2})})
\\
&=G(G(z',s), r)&&(\text{by permutability}),
\end{align*}
so
$$
G(G(y,v'),r) = G(G(z',s), r).
$$
By the strict monotonicity of $G$ in the first variable, we obtain
$
G(y,v')=G(z',s)
$
and thus
$y\bullet y' = z'\bullet z$.
\tl

(iii) By the solvability condition [S2], there exists $x\in J$ such that, with $G(x_0,r)~=~x$, we have both 
$$
x\bullet x= G(x,r)\in J\quad\text{and}\quad (x\bullet x)\bullet x= G(G(x,r),r) \in J.
$$  

(iv) If $x\bullet y < z$, then $y\bullet x=G(y,r) < z\in J$ by commutativity,  (\ref{commut 1}), and the definition of $\bullet$. Applying [S1], we get $G(w, r) = z$ for some $w\in J$. Using again (\ref{commut 1}),  we obtain $x\bullet w = z$.
\tl

(v) We first show that the sequence $(x_y^n)$ defined by (a) and (b) is strictly increasing. We proceed by induction. Since $x<y$ by definition, we get  from (\ref{commut 1}) and (\ref{commut 2})
$$
x=G(x_0,r)<G(x_0,v)=y,
$$
with the function $G$ strictly monotonic in its second variable. In the sequel, we suppose that 
$G$ is strictly decreasing in its second variable; so,
\begin{equation}\label{v_smaller_than_r}
v<r\,.
\end{equation}
 The proof is similar in the other case.  The following equalities hold by the  definitions of $x_y^1$, $x_y^2$ and commutativity: 
\begin{equation*}
y \bullet x_y^1 = y\bullet x= G(y,r)= x\bullet y= x\bullet x_y^2 =x_y^2 \bullet x= G(x_y^2,r).
\end{equation*}
From $G(y,r)= G(x_y^2,r)$, we get $x_y^2 = y$ and $x_y^1 < x_y^2$.
Assuming that $x_y^{n-1} < x_y^{n}$, we get $y\bullet x_y^{n}= x\bullet x_y^{n+1}$ by the definition of the term $x_y^{n+1}$ in Condition (v) (b) of Lemma \ref{Holder},
and by commutativity
$$
 x_y^{n}\bullet y= G(x_y^{n},v) =  x_y^{n+1} \bullet x=G(x_y^{n+1},r),
$$
yielding $G(x_y^{n},v)  = G(x_y^{n+1},r)$. Since $v<r$ and $G$ is decreasing in its second variable
$$
G(x_y^{n+1}, v) > G(x_y^{n+1},r) = G(x_y^{n},v),
$$
and so 
$$
x_y^n < x_y^{n+1}
$$
because $G$ is strictly increasing in its first variable. By induction, the sequence $(x_y^n)$ is strictly increasing.

Suppose that the set $N(x,z;y)$ of Condition (v) is not finite. Thus, the point $z$ is an upper bound of the sequence $(x_y^n)$. Because this sequence is increasing and  bounded above, it necessarily converges. Without loss of generality, we can assume that we have in fact $\lim_{n\to\infty} x_y^n = z$. Since 
$$
y\bullet x_y^{n-1} = x \bullet x_y^n < x\bullet z
$$
for all $n\in\Na$, the solvability Condition (iv) implies that there is some $z'\in J$ such that $y\bullet z'= x\bullet z$, with necessarily $z'< z$. There must be some $m\in\Na$ such that $z'<x_y^m< z$. We obtain thus
\begin{equation*}
x\bullet z= y\bullet z'
< y\bullet x_y^m = x\bullet x_y^{m+1}
\end{equation*}
and so  $z< x_y^{m+1}$, in contradiction with $\lim_{n\to\infty} x_y^n = z$, with $(x_y^n)$  an increasing sequence.  We conclude that the set $N(x,z;y)$ must be finite for all $x, y$ and $z$ in $J$, with $x < y$. We conclude that the Conditions (i)-(v) of Lemma \ref{Holder} are satisfied.
\tl
To prove that $\bullet$ is associative, we take any $x$, $y$ and $z$ in $J$. Using again $G(x_0, r ) = x$, $G(x_0, v) = y$ and $G(x_0,s) = z$, we have
\begin{align*}
x\bullet (y\bullet z)&= G(y\bullet z, r)&&(\text{since $G(x_0,r) = x$})\\
&=G(G(y,s),r)&&(\text{since $G(x_0,s) = z$})\\
&=G(G(y,r),s)&&(\text{by permutability})\\
&=G(x\bullet y,s)&&(\text{since $G(x_0,r) = x$})\\
&=z\bullet (x\bullet y)&&(\text{since $G(x_0,s) = z$})\\
&=(x\bullet y)\bullet z &&(\text{by commutativity}).
\end{align*}
Finally, since for all $x,y\in J$, we have
$$
x\bullet y = G(y,r)= y\bullet x = G(x, v),
$$
it is clear that the operation $\bullet$ is continuous and strictly increasing in both variables. 
\end{proof}
\section*{Main Result}

The theorem below generalizes results of 
 \citet{hossz62a, hossz62b, hossz62c} \citep[cf.~also][]{aczel:66}.
\begin{theorem}\label{permut => Holder} 
{\rm(i)} A solvable code $M:J\times J'\to H$ is quasi permutable if and only if there exists three  continuous functions $m: \{f(y)+ g(r)\st x\in J,\, r\in J'\}\to H$, $f:J\to \Ree$, and $g:J'\to \Ree$, with $m$ and $f$  strictly increasing and $g$ strictly monotonic, such that
\begin{equation}\label{representation k f g}
M(y, r) = m(f(y)+ g(r)).
 \end{equation}
 
{\rm(ii)}  A solvable code  $G:J\times J'\to J$ is a permutable code if and only if, with $f$ and $g$ as above, we have
 \begin{equation}\label{representation f -1 f g}
G(y, r) = f^{-1}(f(y)+ g(r)).
\end{equation}

{\rm(iii)} If a solvable code $G:J\times J\to J$  is a symmetric function---that is, $G(x,y) = G(y,x)$ for all $x,y\in J$--- then  $G$ is  permutable if and only if there exists a strictly increasing and continuous function $f:J\to J$ satisfying 
 \begin{equation}\label{representation h -1 h h}
G(x, y) = f^{-1}(f(x)+ f(y)).
\end{equation}

{\rm(iv)} If the code $G$ in {\rm (\ref{representation f -1 f g})}  is differentiable in both variables, with non vanishing derivatives, then the functions $f$ and $g$  are differentiable. This differentiability result also applies to the code $G$ and the function $f$ in {\rm (\ref{representation h -1 h h})}.
\end{theorem}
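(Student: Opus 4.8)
The plan is to treat part~(ii) as the core, deduce part~(i) from it, obtain part~(iii) as a symmetric specialization, and handle part~(iv) by a regularity bootstrap.

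For part~(ii) the ``if'' direction is exactly the computation already carried out in the introduction for representation~(\ref{basic_representation}), so only ``only if'' needs work. Given a solvable permutable code $G$, it is $x_0$-solvable by~[S2], so $\bullet$ of Definition~\ref{def operation} is defined, and Lemma~\ref{G satisfies 1-5} tells us $(J,\bullet,\le)$ satisfies Conditions~(i)--(v) of Lemma~\ref{Holder}; that lemma then yields a strictly increasing $f:J\to J$ with $f(x\bullet y)=f(x)+f(y)$. The bridge to $G$ is the identity $G(y,r)=y\bullet G(x_0,r)$: by~(\ref{def circ}), $y\bullet w=G(y,v)$ where $G(x_0,v)=w$, so taking $w=G(x_0,r)$ forces $v=r$ by strict monotonicity of $G$ in the second variable. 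Applying $f$ gives $f(G(y,r))=f(y)+g(r)$ with $g(r):=f(G(x_0,r))$, i.e. $G(y,r)=f^{-1}(f(y)+g(r))$ — and here $f(y)+g(r)=f(y\bullet G(x_0,r))$ is automatically in the range of $f$, so no domain issue arises. For the stated continuity: $f$ is monotone, hence has at most countably many jumps, but a jump of $f$ at a point $a$ would, through $f(x\bullet a)=f(x)+f(a)$ and the continuity of $\bullet$ from Lemma~\ref{G satisfies 1-5}, produce a jump of the same size at every point of the nondegenerate interval $\{x\bullet a:x\in J\}$, which is absurd; then $g=f\circ G(x_0,\cdot)$ is continuous and strictly monotone because $G(x_0,\cdot)$ is. (If the rendition of Lemma~\ref{Holder} in \citet{falma75a} already records continuity of $f$, this step is superfluous.)

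Part~(i) reduces to part~(ii). For ``only if'', if $M$ is quasi permutable via $G$ then $G$ is permutable by Lemma~\ref{M permut => G permut}; writing $G=F\circ M$ as in~(\ref{comonotonic eq 2}), the solvability conditions~[S1] and~[S2] transport through the strictly increasing continuous $F$, so $G$ is a solvable code and part~(ii) applies, giving $G(y,r)=f^{-1}(f(y)+g(r))$ and hence $M(y,r)=F(G(y,r))=m(f(y)+g(r))$ with $m:=F\circ f^{-1}$, strictly increasing and continuous on $\{f(y)+g(r)\}$. For ``if'', given $M(y,r)=m(f(y)+g(r))$ set $G(y,r):=f^{-1}(f(y)+g(r))$, using the solvability of $M$ to see that $f(y)+g(r)$ remains in the range of $f$ (if one must force the value into $J$, replace $g$ by $g-k$ for a suitable constant $k$, which changes nothing below); since $m$ and $f$ are strictly increasing, $G$ is comonotonic with $M$, and $M(G(x,s),t)=m(f(x)+g(s)+g(t))$ is symmetric in $s,t$, so $M$ is quasi permutable. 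Part~(iii) is part~(ii) specialized to $J=J'$ with $G$ symmetric: then $f^{-1}(f(x)+g(y))=f^{-1}(f(y)+g(x))$ forces $f-g$ to be a constant $d$, and the substitution $\tilde f:=f-d$ converts $G(y,r)=f^{-1}(f(y)+g(r))$ into $G(x,y)=\tilde f^{-1}(\tilde f(x)+\tilde f(y))$; the converse is the symmetric version of the introduction's computation.

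For part~(iv) the task is to bootstrap differentiability of $f$ (and then of $g=f(G(x_0,\cdot))-f(x_0)$, which is $f$ composed with a differentiable function) from that of $G$. From $f(G(y,r))=f(y)+g(r)$, fixing $r$ shows that $f$ is differentiable at $u$ if and only if it is differentiable at $G(u,r)$, since $G(\cdot,r)$ is a $C^1$ diffeomorphism with nonvanishing derivative (so its inverse is $C^1$ too); hence the non-differentiability set $Z$ of the monotone, so almost-everywhere differentiable, function $f$ is both mapped into and pulled back by each $G(\cdot,r)$. Picking any $u_0\notin Z$, $f$ is differentiable on the entire nondegenerate interval $G(u_0,J')$; conversely, any $v\in Z$ would force the set $\{u:\exists r\in J',\ G(u,r)=v\}$ into $Z$, and this set is nondegenerate (it contains a neighbourhood of any $a$ with $G(a,b)=v$ for $b$ interior to $J'$), contradicting that $Z$ is Lebesgue-null. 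Thus $f\in C^1$, and the identical argument handles $f$ in the symmetric representation of part~(iii). The main obstacle I expect is exactly this regularity step — pinning down that the intervals $G(u,J')$, or equivalently the preimages of points of $Z$, are genuinely nondegenerate and together contradict the nullity of $Z$ — together with the parallel bookkeeping of the ranges of $f$ and $g$ in parts~(i)--(ii); by contrast, the extraction of the additive representation from Lemma~\ref{Holder} via Lemma~\ref{G satisfies 1-5} is essentially formal.
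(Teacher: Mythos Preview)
Your treatment of parts (i)--(iii) is essentially the paper's own route: build the operation $\bullet$ of Definition~\ref{def operation}, apply Lemma~\ref{G satisfies 1-5} and then Lemma~\ref{Holder} to obtain the additive $f$, read off $G(y,r)=y\bullet G(x_0,r)$ and set $g=f\circ G(x_0,\cdot)$, then pass to $M$ via the comonotonic link $F$, and finally specialize to the symmetric case by observing that $f-g$ is constant. One slip to correct: in part~(i) you write $M(y,r)=F(G(y,r))$ and $m:=F\circ f^{-1}$, but since~(\ref{comonotonic eq 2}) gives $G=F\circ M$, it is $M=F^{-1}\circ G$ and $m=F^{-1}\circ f^{-1}$, exactly as the paper has it. Your explicit argument that $f$ must be continuous (a jump at $a$ would propagate to every point of the nondegenerate interval $\{x\bullet a\}$) is a welcome addition; the paper simply asserts the continuity and monotonicity of $m$, $f$, $g$ without further comment.

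Part~(iv) is where you genuinely diverge. The paper proceeds directly: since $G$ has nonvanishing partials, the first-variable inverse $G_r^{-1}$ is differentiable, and from the representation one obtains (after correcting an apparent typo) $f(x)=f(G_r^{-1}(x))+g(r)$, from which the paper infers differentiability of $f$ and then of $g(r)=f(G(y,r))-f(y)$. As printed, that inference is elliptical---the displayed identity involves $f$ on both sides, so it is not immediately clear how the differentiability of $G_r^{-1}$ alone settles the matter. Your argument avoids this by bootstrapping through Lebesgue's theorem: $f$ monotone is a.e.\ differentiable, the identity $f(G(y,r))=f(y)+g(r)$ with $G(\cdot,r)$ a $C^1$ diffeomorphism makes the non-differentiability set $Z$ invariant under each $G(\cdot,r)$ and each $G(\cdot,r)^{-1}$, and hence any $v\in Z$ would force a nondegenerate interval of preimages into $Z$, contradicting $|Z|=0$. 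This is longer but self-contained and does not rely on the step the paper leaves implicit; the trade-off is that you must verify the preimage set is genuinely nondegenerate (you do, via an interior point of $J'$), whereas the paper's intended argument---once made precise---would presumably be a one-liner.
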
 

Our argument for establishing (i) and (ii) is essentially the same as that in \citet[p.~271-273]{aczel:66} but,
because our solvability conditions [S1]-[S2] are weaker, it relies on Lemma~\ref{Holder} rather than on the representation in the reals of an ordered Archimedean group \citep[for example, c.f.][]{holder1901}.
\wl
\begin{proof} (i)-(ii) Suppose that the code $M$ of the theorem is permutable with respect to a $F$-comonotonic code $G$. By Lemma \ref{M permut => G permut}, the code $G$ is permutable. Defining the operation $\bullet:J\times J'\to J$ by
\begin{equation}\label{def circ 2}
y\bullet x = G(y, r)\quad\EQ\quad G(x_0, r) = x,
\end{equation}
it follows from Lemma \ref{G satisfies 1-5} that the triple $(J,\bullet,\leq)$ satisfies 
 Conditions (i)-(v) of Lemma~\ref{Holder}, with the operation $\bullet$ associative and continuously increasing in both variable. Accordingly, there exists a continuous, strictly increasing function $f:J\to J$ such that
 \begin{equation}\label{Holder eq}
 f(y\bullet x ) = f(y)+f(x).
 \end{equation}
 Defining the strictly monotonic function $\psi: J'\to J$ by
 $$
\psi(s) = G(x_0, s), 
 $$
 we get from (\ref{def circ 2}) and (\ref{Holder eq}),
 $$
 f(y\bullet x) = f(G(y,r)) = f(y\bullet G(x_0,r)) = f(y) + f(\psi(r)),
 $$
 and thus
$$
 G(y,r) = f^{-1}( f(y)+f(\psi(r))),
 $$
 or with with $g = f\circ\psi$,
 \begin{equation}\label{G(y,r)= f_1}
 G(y,r) = f^{-1}(f(y) + g(r)).
\end{equation}
(Notice that  $f(y)+ g(r) \in J$.) Because $G$ is $F$-comonotonic with $M$, and  $F$ maps $H$ onto $J$, we obtain
\begin{gather}\nonumber
M(y, r) = F^{-1}(G(y,r))= (F^{-1}\circ f^{-1})(f(y)+ g(r) ),\\
\noalign{\hspace{-.5cm}or, with $m= F^{-1}\circ f^{-1}$,}
\label{M(y,r)= kf(}
M(y, r) = m(f(y)+ g(r))\qquad\quad(y\in J; r\in J';f(y)+ g(r) \in J).
 \end{gather}
 It is clear that the functions $f$ and $g$ in (\ref{G(y,r)= f_1}) and the functions $m$, $f$ and $g$ in (\ref{M(y,r)= kf(}) are continuous, with the required monotonicity properties. This proves the necessity part of~(i). The sufficiency is straightforward. 
\tl

(ii) This was established in passing:  cf.~Eq.~(\ref{G(y,r)= f_1}).
\tl

(iii) From (ii), we get by the symmetry of $G$
\begin{gather*}
G(x,y) =  f^{-1}(f(x)+ g(y)) = G(y,x) =  f^{-1}(f(y)+ g(x))\\
\noalign{yielding}
f(x)-g(x) = f(y)-g(y) = K
\end{gather*}
for some constant $K$ and all $x,y\in J$. We have thus $g(x) = f(x) - K$ for all $x\in J$. Since $g^{-1}(t) = f^{-1}(t+ K)$, we obtain
$$
g^{-1}\left(g(x) + g(y)\right) = f^{-1}\left(g(x) + g(y) + K\right) =  f^{-1}\left(f(x) + g(y)\right) = G(x,y).
$$ 
Defining $h = g$, we obtain (\ref{representation h -1 h h}).
\tl

 (iv) If the code $G$ in (\ref{representation f -1 f g}) is differentiable with non vanishing derivatives, then, for every $r\in J$, the inverse $G^{-1}_r$ of $G$ in the first variable is differentiable.  From (\ref{representation f -1 f g}), we get 
  $ f(x) = G^{-1}_r (x) + g(r)$
 with $ G^{-1}_r (x) = y$. 
So, $f$ is differentiable, and since, from (\ref{representation f -1 f g}) again,
 $$
 g(r) = f\left(G(y,r)\right) - f(y) 
 $$
with $f$ differentiable and $G$ differentiable in the second variable, $g$ is also differentiable. 
The differentiability of $f$ in (\ref{representation h -1 h h}) is immediate.
\end{proof}

We mention in passing a simple uniqueness result concerning our basic representation equation~(\ref{representation f -1 f g}).

\begin{lemma}\label{uniqueness} Suppose that the representation $G(y,r) = f^{-1}(f(y) + g(r))$ of Theorem {\rm \ref{permut => Holder}(ii)} holds for some code $G$, with $f$ and $g$ satisfying the stated continuity and monotonicity conditions. Then we also have 
$G(y,r) = (f^*)^{-1}(f^*(y) + g^*(r))$ for some  continuous functions $f^*$ and $g^*$, respectively co-monotonic with $f$ and $g$, if and only if $f^* = \xi f + \hh$ and $g^*=\xi g$, for some constants $\xi > 0$ and $\hh$.
\end{lemma}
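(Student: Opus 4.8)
The plan is to exploit the representation equation directly and reduce the uniqueness claim to the classical uniqueness of the additive representation furnished by H\"older's Theorem (Lemma~\ref{Holder}). The ``if'' direction is the routine one: assuming $f^* = \xi f + \hh$ with $\xi>0$ and $g^*=\xi g$, one checks by substitution that $(f^*)^{-1}(t) = f^{-1}\bigl((t-\hh)/\xi\bigr)$, so that $(f^*)^{-1}(f^*(y)+g^*(r)) = f^{-1}\bigl((\xi f(y)+\hh+\xi g(r)-\hh)/\xi\bigr) = f^{-1}(f(y)+g(r)) = G(y,r)$; the comonotonicity of $f^*$ with $f$ and of $g^*$ with $g$ is immediate from $\xi>0$. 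I would dispatch this in two lines.

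For the ``only if'' direction, suppose $f^{-1}(f(y)+g(r)) = (f^*)^{-1}(f^*(y)+g^*(r))$ for all $y\in J$, $r\in J'$. First I would normalize: fix $x_0$ as in the solvability condition [S2] and let $\psi(r) = G(x_0,r)$, so that (as in the proof of Theorem~\ref{permut => Holder}) $g = f\circ\psi$ and likewise $g^* = f^*\circ\psi$ — here I use that the \emph{same} code $G$ is being represented, so the auxiliary function $\psi$ is intrinsic to $G$ and common to both representations. Thus it suffices to prove $f^* = \xi f + \hh$; the relation $g^* = \xi g$ then follows by composing with $\psi$, since $g^* = f^*\circ\psi = \xi(f\circ\psi) + \hh$ would only give $g^* = \xi g + \hh$ — so I must instead argue on the operation level. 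Concretely, recall the operation $\bullet$ on $J$ from Definition~\ref{def operation}: $x\bullet y = G(x,v)$ where $G(x_0,v)=y$. By Lemma~\ref{G satisfies 1-5}, $\bullet$ is associative, commutative, continuous and strictly increasing, and both $f$ and $f^*$ are \emph{additive} for it, i.e. $f(x\bullet y) = f(x)+f(y)$ and $f^*(x\bullet y) = f^*(x)+f^*(y)$ — the latter because, writing out the representation for $G(x,v)$ with $G(x_0,v)=y$, one gets $f^*(x\bullet y) = f^*(x) + g^*(v) = f^*(x) + f^*(\psi(v)) = f^*(x)+f^*(y)$.

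Now the core of the argument: $f$ and $f^*$ are two strictly increasing, continuous, real-valued additive functions on the same positively-ordered ``interval with a commutative associative operation'' $(J,\bullet,\leq)$. Their composite $\varphi = f^*\circ f^{-1}$, defined on the interval $f(J)\subseteq\Ree$, is strictly increasing, continuous, and satisfies Cauchy's functional equation $\varphi(s+t) = \varphi(s)+\varphi(t)$ on $f(J)$: indeed for $s=f(x)$, $t=f(y)$ we have $s+t = f(x\bullet y)\in f(J)$ and $\varphi(s+t) = f^*(x\bullet y) = f^*(x)+f^*(y) = \varphi(s)+\varphi(t)$. A monotone (hence locally bounded) solution of Cauchy's equation on an interval is linear, so $\varphi(s) = \xi s$ for some constant $\xi$, and $\xi>0$ by strict monotonicity; this gives $f^* = \xi f$. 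To recover the additive constant $\hh$ I would not normalize $f$ to vanish; rather I note the slight generality in the statement — inspecting (\ref{sol_f_LF}) the intended conclusion is $f^* = \xi f + \hh$, and the constant enters only if one allows the reference point of $f$ to shift. The cleanest route is: replace $f$ by $\tilde f = f - f(x_1)$ for a fixed $x_1$ (so $\tilde f$ is again additive for the \emph{shifted} operation, or simply observe $f(J)$ need not contain $0$ and $\varphi$ solves Cauchy's equation only on the sub-interval $f(J)$, whose affine — not merely linear — solutions are $\varphi(s) = \xi s + \hh$ when $0\notin f(J)$). Carrying this through yields $f^* = \xi f + \hh$, and then $g^* = f^*\circ\psi = \xi(f\circ\psi) + \hh$; to kill the $+\hh$ in $g^*$ I use that $g = f\circ\psi$ exactly (no constant), together with the requirement $f^{-1}(f(y)+g(r)) = (f^*)^{-1}(f^*(y)+g^*(r))$ evaluated once more: substituting $f^* = \xi f+\hh$ forces $(f^*)^{-1}(u) = f^{-1}((u-\hh)/\xi)$, hence the identity becomes $f^{-1}(f(y)+g(r)) = f^{-1}\bigl(f(y) + (g^*(r)-\hh)/\xi\bigr)$, so $g^*(r) = \xi g(r) + \hh$ — and comparing with $g^* = \xi g + \hh$ is consistent, but to match the asserted $g^* = \xi g$ one must absorb the constant into the normalization of $f^*$, i.e. the statement implicitly fixes the constant for $g$ to be $0$.

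\emph{Main obstacle.} The genuine mathematical content is the Cauchy-equation step — reducing to it and invoking ``monotone solution of Cauchy on an interval is linear'' (citable, e.g. \citet[][page 31]{aczel:66}). The bookkeeping obstacle, and the one I would be most careful about, is the additive constant: the theorem allows $f^* = \xi f + \hh$ but $g^* = \xi g$ with \emph{no} constant, which is only consistent once one pins down the normalization conventions for $f$ and $g$ separately (the constant $\hh$ can live in $f$ but has been normalized out of $g$ via the choice $g = f\circ\psi$ with $\psi(r) = G(x_0,r)$). I would state this normalization explicitly at the start of the proof to avoid the apparent clash.
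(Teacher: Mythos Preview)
Your sufficiency argument matches the paper's exactly. For necessity, however, your route is both different from and more tangled than the paper's, and it contains a genuine gap.

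\textbf{The gap.} You assert that $g = f\circ\psi$ and $g^* = f^*\circ\psi$, and hence that both $f$ and $f^*$ are additive on $(J,\bullet)$. This is false for an arbitrary pair $(f,g)$ satisfying the representation. From $G(y,r)=f^{-1}(f(y)+g(r))$ with $y=x_0$ one gets $f(\psi(r)) = f(x_0) + g(r)$, so $g = f\circ\psi - f(x_0)$, not $g = f\circ\psi$. Consequently $f(x\bullet y) = f(x) + f(y) - f(x_0)$, which is additive only after you pass to $\tilde f = f - f(x_0)$ (and likewise $\tilde f^* = f^* - f^*(x_0)$). Once you make this shift, Cauchy on $\varphi = \tilde f^*\circ \tilde f^{-1}$ gives $\tilde f^* = \xi\tilde f$, hence $f^* = \xi f + (f^*(x_0)-\xi f(x_0)) = \xi f + \hh$, and then $g^* = f^*\circ\psi - f^*(x_0) = \xi(f\circ\psi - f(x_0)) = \xi g$ with \emph{no} residual constant. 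So your lengthy worry in the last two paragraphs about ``absorbing the constant into the normalization of $f^*$'' and about the statement ``implicitly fixing'' a normalization is misplaced: the asymmetry $f^*=\xi f+\hh$ versus $g^*=\xi g$ is real and comes out automatically once the bookkeeping is done correctly.

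\textbf{Comparison with the paper.} The paper avoids $\bullet$ and H\"older entirely. It simply applies $f^*$ to both sides of $(f^*)^{-1}(f^*(y)+g^*(r)) = f^{-1}(f(y)+g(r))$, substitutes $z=f(y)$, $s=g(r)$, and obtains the Pexider equation
\[
(f^*\circ f^{-1})(z+s) = (f^*\circ f^{-1})(z) + (g^*\circ g^{-1})(s)
\]
on an open connected region, whose monotone solutions are $(f^*\circ f^{-1})(z)=\xi z+\hh$ and $(g^*\circ g^{-1})(s)=\xi s$. The asymmetric constants fall out of the Pexider solution directly, in two lines, with no reference to the operation $\bullet$ or to Lemma~\ref{Holder}. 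Your approach can be repaired, but it reconstructs machinery that the direct functional-equation argument makes unnecessary.
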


{\sc Proof.} (Necessity.) Suppose that 
$$
(f^*)^{-1}(f^*(y) + g^*(r))= f^{-1}(f(y) + g(r)).
$$
Then, with $z= f(y)$ and $s = g(r)$ and applying $f^*$ on both sides, we get
\begin{equation}\label{Pexi <=>}
(f^*\circ f^{-1})(z) + (g^*\circ g^{-1})(s) = (f^*\circ f^{-1})(z + s),
\end{equation}
a Pexider equation. It is clear that $(f^*\circ f^{-1})$ and $(g^*\circ g^{-1})$ are strictly increasing and continuous and that (\ref{Pexi <=>}) is defined on an open connected subset of $\Ree_+^2$. 
Accordingly  \citep[c.f.][and footnote 1]{aczel:05}, with $h=(f^*\circ f^{-1})$  and $k = m = (g^*\circ g^{-1})$,  we get $(f^*\circ f^{-1})(z)= \xi z + \hh$ and $(g^*\circ g^{-1})(s) = \xi s$, $\xi  > 0$, and so $f^*(y) = \xi f(y) + \hh$ and $g^*(r) = \xi g(r)$.
\tl
 
(Sufficiency.) If $f^* = \xi f + \hh$ and $g^*=\xi g$, with $\xi> 0$, then 
\begin{align*}
(f^*)^{-1}(f^*(y) + g^*(r))&= f^{-1}\left(\frac{f^*(y) + g^*(r) - \hh}\xi\right)\\
&=  f^{-1}\left(\frac{\xi f(y) + \hh + \xi g(r) - \hh}\xi\right)\\
&= f^{-1}(f(y) + g(r)).
\end{align*}

\vspace{-1cm}

\EOP

%\bibliography{grand}%%%grand.bib must be in the folder, here "papers."
%\bibliographystyle{apalike}

\end{document}